\documentclass{article}

\usepackage{arxiv}

\usepackage[utf8]{inputenc} 
\usepackage[T1]{fontenc}    
\usepackage{hyperref}       
\usepackage{url}            
\usepackage{booktabs}       
\usepackage{nicefrac}       
\usepackage{microtype}      
\usepackage{lipsum}
\usepackage{graphicx}

\usepackage{amsmath,amsfonts,amsthm,amssymb,mathrsfs,dsfont,bbm} 
\usepackage{mathtools}
\usepackage{bbm}
\usepackage{hyperref}
\usepackage{subcaption}
\usepackage{xcolor}
\usepackage{algorithm}
\usepackage{algpseudocode}
\usepackage{natbib}

\newcommand{\cP}{\mathcal{P}}

\newcommand{\cS}{\mathcal{S}}

\newcommand{\cD}{\mathcal{D}}

\newcommand{\bP}{\mathbb{P}}
\newcommand{\bE}{\mathbb{E}}

\newcommand{\erf}{\text{erf}}

\newtheorem{theorem}{Theorem}[section]

\newtheorem{corollary}{Corollary}[section]
\newtheorem{lemma}[theorem]{Lemma}

\newtheorem{definition}[theorem]{Definition}

\newcommand{\E}{\bE}      

\newcommand{\argmax}{\operatornamewithlimits{argmax}}

\title{An Improved Upper Bound for the Euclidean TSP Constant Using Band Crossovers}

\author{
 Julia Gaudio \\
  Department of Industrial Engineering \& Management Sciences \\
  Northwestern University\\
  Evanston, IL 60208 \\
   \And
 Charlie K. Guan \\
  Department of Industrial Engineering \& Management Sciences \\
  Northwestern University\\
  Evanston, IL 60208 \\
}

\begin{document}
\maketitle
\begin{abstract}
Consider $n$ points generated uniformly at random in the unit square, and let $L_n$ be the length of their optimal traveling salesman tour. Beardwood, Halton, and Hammersley (1959) showed $L_n / \sqrt n \to \beta$ almost surely as $n\to \infty$ for some constant $\beta$. The exact value of $\beta$ is unknown but estimated to be approximately $0.71$ (Applegate, Bixby, Chv\'atal, Cook 2011). Beardwood et al. further showed that $0.625 \leq \beta \leq 0.92116.$ Currently, the best known bounds are $0.6277 \leq \beta \leq 0.90380$, due to Gaudio and Jaillet (2019) and Carlsson and Yu (2023), respectively. The upper bound was derived using a computer-aided approach that is amenable to lower bounds with improved computation speed. In this paper, we show via simulation and concentration analysis that future improvement of the $0.90380$ is limited to $\sim0.88$. Moreover, we provide an alternative tour-constructing heuristic that, via simulation, could potentially improve the upper bound to $\sim0.85$.
Our approach builds on a prior \emph{band-traversal} strategy, initially proposed by Beardwood et al. (1959) and subsequently refined by Carlsson and Yu (2023): divide the unit square into bands of height $\Theta(1/\sqrt{n})$, construct paths within each band, and then connect the paths to create a TSP tour. Our approach allows paths to cross bands, and takes advantage of pairs of points in adjacent bands which are close to each other. A rigorous numerical analysis improves the upper bound to $0.90367$. 
\end{abstract}


\section{Introduction}
For given vertices $x_1, \dots x_n$ on the unit square, the objective of the Euclidean traveling salesman problem (TSP) is to construct the shortest tour through all these vertices. The probabilistic analysis of \citet{BHH} yielded one of the first limit theorems in combinatorial optimization. 

\begin{theorem}[BHH Theorem] Generate vertices $X_1, \dots, X_n$ independently and uniformly on the unit square. Let $L_n$ denote the optimal TSP tour through $X_1, \dots, X_n$. There exists a constant $\beta$ such that 
\[ \lim_{n\to\infty} \frac{L_n}{\sqrt{n}} \to \beta \]
almost surely. The constant is bounded such that $0.625 \leq \beta \leq 0.92116.$
\end{theorem}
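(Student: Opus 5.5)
We prove the theorem in three parts: existence of the limit, the upper bound, and the lower bound.

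\textbf{Existence of the limit.} First we Poissonize. Let $\Pi$ be a Poisson process of unit intensity on $\mathbb{R}^2$, and let $T(t)$ be the optimal tour length through the points of $\Pi$ in $[0,t]^2$. Partition $[0,t]^2$ into $m^2$ subsquares of side $t/m$. Optimal tours in the subsquares can be joined into one tour at an extra cost of $O(t m)$, for instance by a serpentine connection. This gives the near-subadditivity
\[ \E T(t) \le m^2\, \E T(t/m) + C t m. \]
Combined with the scaling $\E T(t) = O(t^2)$, the standard subadditive argument shows that $\E T(t)/t^2$ converges to some $\beta$. We then transfer to $n$ uniform points. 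Scaling gives $L_n \stackrel{d}{=} T(\sqrt n)/\sqrt n$, up to the Poisson-versus-binomial coupling. Adding or deleting $k$ points changes the tour length by at most $O(\sqrt{k})$ on the unit square. Together these give $\E L_n/\sqrt n \to \beta$.

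Almost sure convergence then follows from concentration. $L_n$ is a function of independent points, and changing one point alters it by at most $O(1)$; a finer martingale bound of Steele's type gives $\mathrm{Var}(L_n) = O(1)$. Along the subsequence $n = k^2$, Chebyshev's inequality and Borel--Cantelli give a.s.\ convergence. Monotonicity-type control of $L_n$ between consecutive squares fills in the remaining $n$. We expect this step, a.s.\ convergence for the full sequence rather than convergence of means, to be the main technical obstacle. The fallback is Azuma's inequality with bounded differences $O(1/\sqrt n)$ after normalization, which already gives summable tails $\exp(-c\,\epsilon^2 n)$ for $L_n/\sqrt n$.

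\textbf{Upper bound.} We use the band construction. Cut the square into horizontal strips of height $h = a/\sqrt n$. Within each strip, visit the points in order of $x$-coordinate, and link the strips in boustrophedon fashion; the linking cost is $o(\sqrt n)$. For two consecutive points in a strip, the horizontal gap is approximately exponential with rate $n h$, and the vertical offset is the difference of two independent uniforms on $[0,h]$. The expected length is therefore about
\[ \frac{n}{\sqrt n}\,\E\sqrt{X^2 + a^2 (U_1-U_2)^2}, \qquad X\sim \mathrm{Exp}(a). \]
Optimizing over $a$ numerically gives $0.92116$.

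\textbf{Lower bound.} Every point is joined by two tour edges, each at least as long as its distance to its nearest neighbour. Refining this, each point contributes half the sum of its distances to its first and second nearest neighbours, so
\[ L_n \ge \tfrac12 \sum_i (d_1(i) + d_2(i)). \]
Poisson computations give $\E d_1 \approx \tfrac{1}{2\sqrt n}$ and $\E d_2 \approx \tfrac{3}{4\sqrt n}$. The lower bound is therefore $\tfrac12\left(\tfrac12 + \tfrac34\right) = 0.625$.
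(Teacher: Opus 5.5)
Your proposal is correct. For the two numerical bounds it is the same argument the paper recounts. The band traversal matches: your $\E\sqrt{X^2+a^2(U_1-U_2)^2}$ with $X\sim\mathrm{Exp}(a)$ equals the paper's $\frac1h\E\|(Z,h^2(U_1-U_0))\|$ after setting $X=Z/a$ and $a=h$, optimized near $h=\sqrt3$. The degree-two bound also matches: $2L_n\ge\sum_i(d_1(i)+d_2(i))$ with $\E d_1=\frac{1}{2\sqrt n}$ and $\E d_2=\frac{3}{4\sqrt n}$. Boundary effects only make nearest-neighbour distances in the square stochastically larger, so your $\approx$ can be a $\ge$.

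The paper does not prove existence or almost-sure convergence of the limit; it only cites BHH and the Poissonization lemma. Your treatment of that part is therefore an addition rather than a competing route. The chain you give is the standard one and it works:
\begin{itemize}
\item near-subadditivity of $\E T(t)$ together with monotonicity in $t$;
\item de-Poissonization via the $O(\sqrt k)$ insertion/deletion bound;
\item $\mathrm{Var}(L_n)=O(1)$ from Efron--Stein, using $0\le L(A\cup\{x\})-L(A)\le 2\,\mathrm{dist}(x,A)$ and $\E\,\mathrm{dist}^2=O(1/n)$;
\item Chebyshev and Borel--Cantelli along $n=k^2$;
\item the sandwich $L_{k^2}\le L_n\le L_{(k+1)^2}$ for the nested sequence.
\end{itemize}

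One remark in your write-up is false and should be deleted. For $L_n/\sqrt n$, bounded differences $c_i=O(1/\sqrt n)$ give $\sum_{i=1}^n c_i^2=O(1)$ in Azuma/McDiarmid. The resulting tail bound is $\exp(-c\,\epsilon^2)$, which does not decay with $n$ and is not summable. The worst-case effect of moving a single point on $L_n$ is genuinely of order $1$; only its typical effect is $O(1/\sqrt n)$. To get tails of the form $\exp(-c\,\epsilon^2 n)$ you would need Talagrand-type Gaussian concentration of $L_n$ on scale $O(1)$, not plain bounded differences. Your main route through the variance bound does not need this fallback, so the proof stands without it.
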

More generally, \cite{BHH} proved almost sure convergence results for vertices sampled independently from any common density function over a compact support of $\mathbb{R}^d$ for any integer $d\geq 2.$ The uniform sampling of vertices in $d=2$ is the simplest form of the ``probabilistic'' TSP to analyze. However, the value of $\beta$ is unknown, with the bulk of the literature focusing on numerically estimating $\beta$ using Monte Carlo estimates \citep{stein1977, ONG1989231, Fiechter1994APT, LeeChoi1994, Johnson1996AsymptoticEA, PercusMartin, VALENZUELA1997157, JacobsenReadSaleur}. The most recent Monte Carlo estimate by \citet{Applegate2006, concorde_tsp_solver} using very large simulated instances suggests that $\beta \approx 0.71.$ 

The original lower bound of $0.625$ by \citet{BHH} was shown by observing that, in a tour, every vertex is of degree two. As a result, the tour length can be lower-bounded by connecting each vertex to its two nearest neighbors. More recently, \citet{GAUDIO202067} improved the lower bound by refining the analysis to yield $0.6277.$
On the other hand, one may construct a heuristic that yields a tour and analyze its length to upper-bound $\beta$. The original upper bound of $0.92116$ by \citet{BHH} was shown by partitioning the unit square into horizontal bands and merging left-to-right traversing paths in each band. \citep{Steinerberger} improved the upper bound by $9/16 \times 10^{-6}$ by analyzing zig-zag structures where four consecutive vertices have a small horizontal difference but a large vertical difference and changing the visitation order. Most recently, \citet{YuCarlsson2023} generalized the approach by optimizing the visitation order for every $k$ consecutive vertices. Setting $k=4$ yielded an upper bound of $0.90380$, which is the best existing upper bound of $\beta$ to the best of our knowledge. 

In this paper, we focus on further upper-bounding $\beta$. First, we investigate how much more the approach of \citet{YuCarlsson2023} can improve $\beta$. Their approach, which we term the tuple optimization approach, is computer-aided and is thus amenable to future improvements with faster computation. We analyze this potential for improvement using Monte Carlo simulations. When $k=4$, the tuple optimization approach yields an upper bound of $\approx 0.885$. Increasing $k$ (at the expense of exponentially more computation), reduces the upper bound to $\approx 0.869$ when $k=8$. 

We then provide a new heuristic that uniformly improves the upper bound of \citet{YuCarlsson2023}. The key idea is to move beyond the band constraint, which limits the heuristic to construct paths that only traverse through vertices within each band. 
Our approach allows vertices near the band boundary to be appended to a path in the adjacent band. Using Monte Carlo and analyzing its concentration, we estimate this upper bound to be $\approx 0.869$ for $k=4$ and $\approx 0.850$ for $k=8.$ Furthermore, we rigorously improve the upper bound such that $\beta \leq 0.90367$, improving the bound of \cite{YuCarlsson2023} by $0.00013$. 

The rest of the paper is structured as follows. Section \ref{sec:prev} analyzes the previous approaches to upper-bound $\beta$ in greater detail. Section \ref{sec:tuple_improve} analyzes the potential for improvement of the tuple approach using Monte Carlo. Section \ref{sec:band_crossing_approach} introduces our band crossing strategy and provides an improved valid upper bound for $\beta$.

\section{Previous approaches for the upper bound}
\label{sec:prev}

First, we review the original argument by \citet{BHH} that showed $\beta \leq 0.92116$. One may replace the $n$ independent points by a Poisson process with intensity $n$ and characterize the limiting tour length in expectation, as justified by the following lemma. 
\begin{lemma}[\citet{BHH}]
Let $\cP_n$ denote a Poisson process with rate $n$ on $[0, 1]^2$. It holds that $\E[L(\cP_n)] / \sqrt{n} \to \beta$ almost surely.
\end{lemma}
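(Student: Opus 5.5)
The plan is to derive the Poisson statement from the fixed-$n$ BHH Theorem by conditioning on the number of points, using a concentration bound for $N\sim\mathrm{Poisson}(n)$ and a Lipschitz-type estimate for $L$ in the number of points. (The quantity $\E[L(\cP_n)]/\sqrt{n}$ is deterministic, so ``almost surely'' just means ordinary convergence of a real sequence.)

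\textbf{Step 1: reduce to uniform points.} Write $N=|\cP_n|$, so $N\sim\mathrm{Poisson}(n)$. Conditionally on $N=m$, the points of $\cP_n$ are $m$ i.i.d.\ uniform points on $[0,1]^2$. Writing $\ell_m:=\E[L_m]$ for the expected optimal tour length through $m$ i.i.d.\ uniform points, this gives
\[
\E[L(\cP_n)]=\sum_{m\ge0}\bP(N=m)\,\ell_m .
\]

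\textbf{Step 2: the fixed-$m$ expectation.} I need $\ell_m/\sqrt m\to\beta$. The BHH Theorem gives only almost sure convergence of $L_m/\sqrt m$, so convergence of expectations needs uniform integrability. I will use the classical deterministic bound: any $m$ points in the unit square admit a tour of length at most $C\sqrt m + C$ for an absolute constant $C$ (e.g.\ by the strip/serpentine construction). Hence $L_m/\sqrt m$ is bounded uniformly in $m$. Bounded convergence then turns the BHH Theorem into $\ell_m/\sqrt m\to\beta$.

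\textbf{Step 3: compare $\ell_N$ with $\ell_n$.} The key tool is monotonicity plus a Lipschitz bound. Couple the samples by taking $X_1,X_2,\dots$ i.i.d.\ and $L_m=L(X_1,\dots,X_m)$. Adding points never shortens the optimal tour, by the triangle inequality (shortcutting), so $L_m\le L_{m+k}$. In the other direction, the $k$ extra points can be visited by an auxiliary tour of length at most $C\sqrt k+C$ by Step 2's bound. Splicing that tour into the tour on the first $m$ points costs at most an additional $2\sqrt2$. Hence
\[
|\ell_{m}-\ell_{m'}|\le C'\bigl(\sqrt{|m-m'|}+1\bigr).
\]

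\textbf{Step 4: concentration and conclusion.} Combining Steps 1 and 3,
\[
\bigl|\E[L(\cP_n)]-\ell_n\bigr|\le C'\bigl(\E\sqrt{|N-n|}+1\bigr)\le C'\bigl((\mathrm{Var}\,N)^{1/4}+1\bigr)=O(n^{1/4}).
\]
This is $o(\sqrt n)$. Dividing by $\sqrt n$ and applying Step 2 gives $\E[L(\cP_n)]/\sqrt n\to\beta$.

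\textbf{Main obstacle.} The expected sticking point is Step 2, i.e.\ upgrading the almost sure statement to convergence of means. I would settle it with the deterministic $O(\sqrt m)$ worst-case tour bound, obtained by partitioning the square into $\sqrt m$ horizontal strips and traversing each strip left to right. The same bound is what powers the Lipschitz estimate in Step 3, so the whole argument rests on that one elementary estimate.
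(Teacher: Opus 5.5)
Your argument is correct. You are also right that ``almost surely'' is vacuous here, since $\E[L(\cP_n)]/\sqrt n$ is a deterministic sequence.

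The paper gives no proof of this lemma; it only attributes it to \citet{BHH}. The useful comparison is therefore with the standard argument, which runs in the opposite direction from yours. There, the Poisson statement is proved first and directly:
\begin{itemize}
\item rescale $\cP_n$ to a unit-rate process on $[0,\sqrt n]^2$;
\item cut the square into a grid of subsquares;
\item use geometric subadditivity of the tour length, together with the fact that a Poisson process restricts to independent Poisson processes on disjoint subsquares.
\end{itemize}
This yields convergence of $\E[L(\cP_n)]/\sqrt n$ with no almost sure input. The fixed-$n$ mean and almost sure statements are then obtained by de-Poissonization and concentration. That de-Poissonization rests on exactly the smoothness estimate $0\le L(F\cup G)-L(F)\le C\sqrt{|G|}+C$ and the Poisson concentration bound that you use in Steps 3--4.

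You instead take the fixed-$n$ almost sure theorem as a black box and transfer it to the Poisson setting by bounded convergence plus that same smoothness estimate. Given the BHH Theorem as stated in the paper, this is shorter, fully rigorous, and gives the explicit rate $|\E[L(\cP_n)]-\E[L_n]|=O(n^{1/4})$. The cost is that it depends on the deeper almost sure result. So it could not serve as a step toward proving the BHH Theorem itself, whereas the subadditivity route establishes the lemma from scratch.
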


Next, the unit square is partitioned into horizontal bands of height $h/ \sqrt{n}$, where $h$ is a constant to be chosen later. Within each band, the approach connects the vertices by traversing them from left to right. A tour is finally constructed by adding edges to connect paths in adjacent bands. The total length needed to aggregate all band-paths is almost surely $o(\sqrt{n})$. As a result, the only component that contributes to $\beta$ is the expected total length of the paths within the bands. 

Consider any path $(X_1, Y_1), \dots, (X_k, Y_k)$, where we order $X_1 \leq \dots \leq X_k$. Since the vertices are generated from a Poisson process, it holds that $Y_i \sim h U_i / \sqrt{n}$, where the $U_i$'s are independent uniform random variables on $[0, 1]$, and the difference between consecutive $X_i$'s follows a scaled exponential distribution; i.e., $h\sqrt{n}(X_{i+1} - X_i) \sim Z_i$, where the $Z_i$'s are independent exponential random variables with mean $1$. It follows that the expected distance between consecutive vertices is 
\begin{align*}
    \E \bigg\|\begin{pmatrix}
        X_{i+1} - X_i  \\ Y_{i+1} - Y_i  
    \end{pmatrix} \bigg\| 
    &= 
    \frac{1}{h\sqrt{n}}
    \E \bigg\| \begin{pmatrix}
        Z_i  \\ h^2(U_{i+1} - U_i)  
    \end{pmatrix} \bigg\|.
\end{align*}
We upper-bound the total expected length by multiplying by $n$ and then upper-bound $\beta$ by dividing by $\sqrt{n}$ to obtain 
\begin{align}
    \beta &\leq \frac{1}{h}
    \E \bigg\| \begin{pmatrix}
        Z_i  \\ U_{i+1} - U_i  
    \end{pmatrix} \bigg\| \nonumber \\
    &= \frac{1}{h} \int_0^\infty \int_0^1 \int_0^1 e^{-z} \sqrt{z^2 + h^4(u_0 - u_1)^2} du_0 du_1 dz \nonumber \\
    &= \frac{1}{3h^5}\int_0^\infty e^{-z}\left( 3h^2z^2\log\left(h^2/z + \sqrt{h^4/z^2+1}\right) + 2z^3 + (h^4-2z^2)\sqrt{h^4+z^2}\right)dz. \label{eq:bhh}
\end{align}
After evaluating \eqref{eq:bhh} under different values of $h$ using Simpson's rule, \citet{BHH} reported the lowest upper bound of $0.92037$ at $h=\sqrt{3}$. However, this bound was corrected to $0.92116$ by \citet{Steinerberger} after fixing a numerical error. In addition, \citet{Steinerberger} improved the upper bound by $9/16 \times 10^{-6}$ by observing that the left-to-right traversal is not always optimal as it may induce unnecessary zigzag structures within the band. Under specific configurations when a set of four consecutive vertices have a small horizontal difference but a large vertical difference, an alternate vertex visitation order is preferable and reduces the path length. The numerical improvement is calculated by lower-bounding the probability of these zigzag configurations from occurring and analyzing the improvement in the path length. 

Most recently, \citet{YuCarlsson2023} improved the upper bound by further refining the vertex visitation order. Under the same band partition of height $h/\sqrt{n}$, fix a parameter $k$ and consider the set of ordered $(k+1)$-tuples $(X_i, Y_i), \dots, (X_{i+k}, Y_{i+k})$ such that $X_i \leq \dots \leq X_{i+k}$. 
Then, their approach selected the permutation that minimizes the total path length traversing these $k+1$ points. The permutations are restricted to those that begin with the leftmost endpoint $(X_i, Y_i)$ and end with the rightmost endpoint $(X_{i+k}, Y_{i+k})$ in order to yield segments that can be aggregated to form a valid tour.
For a given $k$, the tuple contains $k+1$ vertices since the right endpoint of one tuple becomes the left endpoint of the next tuple in the band. 
The base case of $k=2$ reduces the analysis to the original approach of \citet{BHH}.

Denote $\widehat \beta_k$ as the upper bound obtained using this tuple optimization approach. For illustration, consider $k=3$. Since there are at most $n/3$ tuples, $\beta$ is upper-bounded by 
\begin{align*}
    \beta \leq \widehat \beta_3 &:= \frac{1}{3h} \E \min\!\left\{
\begin{aligned}
  &\left\lVert
      \begin{pmatrix}
        Z_{1}\\[2pt]
        h^{2}\!\bigl(U_{0}-U_{1}\bigr)
      \end{pmatrix}
    \right\rVert
   +\!
    \left\lVert
      \begin{pmatrix}
        Z_{2}\\[2pt]
        h^{2}\!\bigl(U_{1}-U_{2}\bigr)
      \end{pmatrix}
    \right\rVert
   +\!
    \left\lVert
      \begin{pmatrix}
        Z_{3}\\[2pt]
        h^{2}\!\bigl(U_{2}-U_{3}\bigr)
      \end{pmatrix}
    \right\rVert\\
  &\left\lVert
      \begin{pmatrix}
        Z_{1}+Z_{2}\\[2pt]
        h^{2}\!\bigl(U_{0}-U_{2}\bigr)
      \end{pmatrix}
    \right\rVert
   +\!
    \left\lVert
      \begin{pmatrix}
        Z_{2}\\[2pt]
        h^{2}\!\bigl(U_{2}-U_{1}\bigr)
      \end{pmatrix}
    \right\rVert
   +\!
    \left\lVert
      \begin{pmatrix}
        Z_{3}\\[2pt]
        h^{2}\!\bigl(U_{1}-U_{3}\bigr)
      \end{pmatrix}
    \right\rVert
\end{aligned}
\right\}.
\end{align*}
For general $k$, by denoting $X_i = \sum_{j=1}^i Z_j$ and $\Pi_k$ as the set of all permutations $\pi$ of $\{0, \dots, k\}$ such that $\pi(0)=0$ and $\pi(k) = k$, we obtain 
\begin{align}
    \widehat \beta_k &:= \frac{1}{kh}\E \min_{\pi\in\Pi_k}\sum_{i=1}^k \bigg\| \begin{pmatrix}
        X_{\pi(i)} -  X_{\pi(i-1)} \\ h^2(U_{\pi(i)} -  U_{\pi(i-1)}) 
    \end{pmatrix} \bigg\| \label{eq:yu_carlsson_mc} \\
    &= \frac{1}{kh}\int_{\cD} e^{-x_k} \min_{\pi\in\Pi_k}\sum_{i=1}^k \bigg\| \begin{pmatrix}
        x_{\pi(i)} -  x_{\pi(i-1)} \\ h^2(u_{\pi(i)} -  u_{\pi(i-1)}) 
    \end{pmatrix} \bigg\| dV, \label{eq:yu_carlsson_bound}
\end{align}
where $\cD = \{(x_1, \dots x_k), (u_0, \dots, u_k): 0\leq x_1 \leq \dots \leq x_k, 0\leq u_i \leq 1\}$ is the domain of integration.
\citet{YuCarlsson2023} bounded the integral of \eqref{eq:yu_carlsson_bound} using a simulation-based and computer-aided approach with $k=4$ to yield a valid upper bound of $\widehat \beta_4 \leq 0.90380$. 
Specifically, the integral over $\cD$ of each argument inside the minimum operator in \eqref{eq:yu_carlsson_bound} can be bounded from above since each term can either be analytically evaluated or, due to convexity, upper-bounded by a product of secant lines, which is then numerically integrated via the trapezoidal rule.
To obtain a good bound, one must partition $\cD$ into granular, disjoint boxes and, in each box, select the optimal permutation that minimizes integral over that subregion. 
The authors obtained such partition by simulating $N_{\mathrm{sim}}$ samples of $(k+1)$ tuples uniformly at random, calculating the optimal permutation for each sample, and training a decision tree, whose leaves naturally induce a partition of the domain, to predict the optimal permutation. 
Their best bound of $\widehat \beta_4 \leq 0.90380$ was obtained by simulating $N_{\mathrm{sim}}=3.6\times 10^8$ and training the tree until the the remaining vertices at every leaf node exhibited uniform optimal permutation. While greater $k$ and $N_{\mathrm{sim}}$ could potentially lower this bound, the computation is intensive. Nevertheless, their approach is amenable to further improvements over time as computer hardware improves. 

\section{How much more improvement could we obtain with tuple optimization?}
\label{sec:tuple_improve}

A natural question to ask is  how much the tuple optimization approach could improve in the future. We estimate the limit using Monte Carlo simulations and provide an accompanying concentration analysis to show with a high degree of confidence that $\widehat \beta_4 \approx 0.88$, suggesting the potential for improvement is approximately $0.02$. 

Denote $\widehat \beta_{k, M}$ as the Monte Carlo estimate for $\widehat \beta_k$ using $M$ replicates, defined via
\[
\widehat \beta_{k,M}:=\frac{1}{Mkh}\sum_{i=1}^M L_i,
\qquad
\widehat \beta_k := \frac{\E L}{kh},
\]
where $L_1,\dots,L_M$ are i.i.d.\ copies of
\[
L(X_0,\dots,X_k,U_0,\dots,U_k)
=\min_{\pi\in\Pi_k}\sum_{i=1}^k 
\left\|
\begin{pmatrix}
X_{\pi(i)}-X_{\pi(i-1)}\\
h^2(U_{\pi(i)}-U_{\pi(i-1)})
\end{pmatrix}
\right\|.
\]
 For each replicate, we simulate $k+1$ vertices uniformly at random in the band and compute the minimum path length under the optimal permutation in \eqref{eq:yu_carlsson_mc}.
Table \ref{tab:mc_k4} shows the results for $\widehat \beta_{4, 10^7}$. Across all $h$ that were tested, the bound is approximately $0.88-0.89$, which suggests the bound of $0.90380$ by \cite{YuCarlsson2023} will be unlikely to improve beyond it, since the rigorous upper bound contains numerical approximations that inflate the bound. 

\begin{table}[t]
  \centering
  \caption{Simulation results for the tuple optimization approach for $k=4$ and varying $h$ using $10^7$ replicates.}
  \label{tab:mc_k4}
  \begin{tabular}{ccc}
    \hline
    $h^2$ & $\widehat \beta_{4, 10^7}$ & Standard error \\
    \hline
    3.00 & 0.890205 & 0.000091 \\
    3.25 & 0.886547 & 0.000089 \\
    3.50 & 0.884783 & 0.000087 \\
    3.75 & 0.884487 & 0.000086 \\
    4.00 & 0.884916 & 0.000085 \\
    \hline
  \end{tabular}
\end{table}

Next, we investigate $\widehat\beta_{k, M}$ for higher values of $k$. We fix $h^2=3.75$, which achieved the lowest bound at $k=4$. Table \ref{tab:mc_yc_vary_k} shows increasing $k$ lowers the bound, achieving the best bound of $0.862225$ when $k=8$.

\begin{table}[t]
  \centering
  \caption{Simulation results for the tuple optimization approach for $h^2=3.75$ and varying $k$ using $10^7$ replicates.}
  \label{tab:mc_yc_vary_k}
  \begin{tabular}{ccc}
    \hline
    $k$ & $\widehat \beta_{k, 10^7}$ & Standard error \\
    \hline
    5 & 0.874843 & 0.000077 \\
    6 & 0.869072 & 0.000071 \\
    7 & 0.865011 & 0.000066 \\
    8 & 0.862225 & 0.000062 \\
    \hline
  \end{tabular}
\end{table}

Using concentration analysis, we show that these Monte Carlo estimates at $M=10^7$ tightly concentrate around $\widehat \beta_k$.

\begin{theorem}
\label{thm:conc_beta_chernoff_correct}
Fix any $k\in\mathbb N$, $M\in\mathbb N$, and $h>0$. 
Then for any $\delta\in(0,1)$,
\[
\bP\!\left(\big|\widehat\beta_{k,M}-\widehat\beta_k\big|\ge \varepsilon\right)\le \delta,
\]
where with $t=\log(4/\delta)$ one valid choice is
\begin{equation}
\label{eq:eps_ch_correct}
\varepsilon
=
\frac{1}{kh}
\left(\sqrt{\frac{2kt}{M}} + \frac{t}{M}
+
kh^2\sqrt{\frac{t}{2M}}
\right). 
\end{equation}
\end{theorem}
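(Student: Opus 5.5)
We bound $L$ by the length of the identity path, which is at most its horizontal plus vertical parts:
\[
0\le L \le \sum_{i=1}^k \bigl\|(Z_i,\,h^2(U_i-U_{i-1}))\bigr\| \le \sum_{i=1}^k Z_i + h^2\sum_{i=1}^k |U_i-U_{i-1}| \le X_k + kh^2 .
\]
This splits $L$ into an unbounded sub-exponential part, controlled by $X_k=\sum_{i=1}^k Z_i\sim\mathrm{Gamma}(k,1)$, and a bounded part. The three terms of \eqref{eq:eps_ch_correct} match this split. The first two, $\sqrt{2kt/M}+t/M$, are a Bernstein/Chernoff-type deviation for the average of $M$ independent $\mathrm{Gamma}(k,1)$ variables. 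The last, $kh^2\sqrt{t/(2M)}$, is a Hoeffding deviation for a variable of range $kh^2$.

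We decompose $L = A + B$ with $A := \min(L, X_k)$ and $B := L - A = (L-X_k)_+$. Since $L\le X_k+kh^2$, we have $0\le B\le kh^2$ and $0\le A\le X_k$. Then
\[
\widehat\beta_{k,M}-\widehat\beta_k=\frac{1}{kh}\Bigl(\tfrac1M\textstyle\sum_j (A_j-\E A)+\tfrac1M\sum_j(B_j-\E B)\Bigr).
\]
Hoeffding applied to the $B_j\in[0,kh^2]$ gives deviation $kh^2\sqrt{t/(2M)}$ with probability at least $1-2e^{-t}=1-\delta/2$ (two-sided). For the $A_j$, we use a Bernstein inequality for sub-gamma variables. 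Since $0\le A\le X_k$, the log-MGF of $A-\E A$ is dominated by that of a $\mathrm{Gamma}(k,1)$ variable. Concretely, we need a Bernstein moment condition $\E|A-\E A|^p\le \tfrac{p!}{2}v c^{p-2}$ with $v=k$ and $c=1$. This follows from $\E A^p\le \E X_k^p = k(k+1)\cdots(k+p-1)$ together with the Gamma MGF. The standard sub-gamma bound (Boucheron–Lugosi–Massart, Theorem 2.10) then gives deviation $\sqrt{2vt/M}+ct/M=\sqrt{2kt/M}+t/M$ with probability at least $1-2e^{-t}$. A union bound over the two events yields total failure probability at most $4e^{-t}=\delta$, so $t=\log(4/\delta)$ as stated.

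The main obstacle is verifying the sub-gamma parameters $(v,c)=(k,1)$ for the centered variable $A-\E A$, rather than for $X_k$ itself. For the upper tail, we use $A\le X_k$ and $\E A \ge 0$. This reduces the problem to bounding $\log\E e^{\lambda(A-\E A)}$, which can be controlled through $\E e^{\lambda A}\le \E e^{\lambda X_k}=(1-\lambda)^{-k}$ combined with a centering argument. The lower tail is easier because $A\ge0$: for nonnegative variables, $\log\E e^{-\lambda(A-\E A)}\le \lambda^2\E A^2/2$ and $\E A^2\le \E X_k^2$. If obtaining exactly $v=k$ proves delicate, the fallback is to split instead as $L = L_1+L_2$ with $L_1:=\sum_i Z_i$ exactly Gamma. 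Then $L_2 := L-L_1$ lies in $[-X_k, kh^2]$, and it remains to show $L_2\in[\,\cdot\,,kh^2]$ with an appropriate range. Otherwise, we would simply apply the known Gamma concentration (Laurent–Massart-type bounds) to $\sum_j X_{k,j}$, which is $\mathrm{Gamma}(kM,1)$.
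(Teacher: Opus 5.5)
There is a genuine gap, and it sits at the step you yourself flag as the main obstacle. The only fact you use about $A=\min(L,X_k)$ is $0\le A\le X_k$. From that alone the sub-gamma parameters $(v,c)=(k,1)$ for $A-\E A$ cannot follow, because they are false for general such $A$. Take $A=bX_k$ with $b\sim\mathrm{Bernoulli}(1/2)$ independent of $X_k$. Then $\mathrm{Var}(A)=k(k+1)/2-k^2/4=k^2/4+k/2>k$ for $k\ge 3$. But a sub-gamma variable with variance factor $v$ (equivalently, one satisfying your Bernstein moment condition at $p=2$) has variance at most $v$.

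Each of your specific justifications fails accordingly:
\begin{itemize}
\item $\E A^p\le\E X_k^p$ controls raw rather than central moments, and gives only $\E(A-\E A)^2\le k(k+1)$.
\item The bound $\E e^{\lambda A}\le(1-\lambda)^{-k}$ cannot be re-centered at $\E A$ unless $\E A=k$.
\item Your lower-tail estimate produces the variance proxy $\E X_k^2=k(k+1)$, not $k$.
\end{itemize}
So this route gives at best a first term of order $\sqrt{2k(k+1)t/M}$ instead of $\sqrt{2kt/M}$. Your fallback has the same hole. Knowing only $L-X_k\in[-X_k,kh^2]$, the remainder is unbounded below, so Hoeffding with range $kh^2$ does not apply.

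The missing idea is the lower bound $L\ge X_k$, which you replaced by the weaker $L\ge 0$. Every permutation in $\Pi_k$ starts at vertex $0$ and ends at vertex $k$. By the triangle inequality applied to horizontal components, the path length is therefore at least the net horizontal displacement $X_k-X_0=X_k$. Hence $A$ is identically $X_k\sim\mathrm{Gamma}(k,1)$, so no domination argument is needed, and $B=L-X_k\in[0,kh^2]$.

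With that one line, the rest of your outline goes through verbatim:
\begin{itemize}
\item the sub-gamma tail for the $\mathrm{Gamma}(kM,1)$ sum gives $\sqrt{2kt/M}+t/M$ with failure probability $2e^{-t}$;
\item Hoeffding handles the $B_j$;
\item a union bound gives $4e^{-t}=\delta$.
\end{itemize}
It then coincides with the paper's proof, which is built on exactly the sandwich $G\le L\le G+kh^2$ with $G=X_k-X_0$.
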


Apply Theorem~\ref{thm:conc_beta_chernoff_correct} with $k=4$, $M=10^7$, $h^2=3.75$, and $\delta=0.001$.
Then $t=\log(4/\delta)=\log(4000)$ and
\[
\varepsilon=1.579704\hdots\times 10^{-3}.
\]
Thus, using the estimate $\widehat\beta_{4,10^7}=0.884487$ from Table \ref{tab:mc_k4}, we obtain with probability at least $0.999$ that
\[
\beta_4 \in \widehat\beta_{4,10^7}\pm 1.579704\times 10^{-3}
=
[0.88290730,\;0.88606670].
\]

\begin{proof}[Proof of Theorem~\ref{thm:conc_beta_chernoff_correct}]

Let $G=X_k - X_0$ denote the horizontal span of the tuple, which is the sum of $k$ independent
$\mathrm{Exp}(1)$ random variables. Thus $G\sim \mathrm{Gamma}(k,1)$.
Consider the path induced by the identity permutation, which stochastically dominates $L$
since $L$ is a minimum over all valid permutations. The sum of horizontal lengths is exactly $G$.
Since $U_i\sim \mathrm{Unif}(0,1)$, the vertical increment between consecutive vertices satisfies
\[
\big|h^2(U_{\pi(i)}-U_{\pi(i-1)})\big|\le h^2,
\]
and therefore each step length is at most the horizontal increment plus $h^2$.
Summing over $k$ steps yields $L \le G + kh^2$.
On the other hand, any path from the left endpoint to the right endpoint must cover the horizontal span,
so $G\le L$. Hence we have the sandwich relation
\[
G \le L \le G+kh^2,
\]
equivalently,
\[
L = G+B,\qquad 0\le B \le kh^2,
\]
for some bounded remainder random variable $B$.

Let $L_i=G_i+B_i$ be i.i.d.\ copies with sample means
\[
\bar L:=\frac1M\sum_{i=1}^M L_i,\qquad
\bar G:=\frac1M\sum_{i=1}^M G_i,\qquad
\bar B:=\frac1M\sum_{i=1}^M B_i.
\]
Then $\bar L-\E L=(\bar G-\E G)+(\bar B-\E B)$, so for any $a_G,a_B>0$,
\begin{equation}
\label{eq:union_GB}
\bP\big(|\bar L-\E L|\ge a_G+a_B\big)
\le 
\bP\big(|\bar G-k|\ge a_G\big)
+
\bP\big(|\bar B-\E B|\ge a_B\big).
\end{equation}

\paragraph{Step 1: Chernoff for the Gamma mean $\bar G$}

Let $S:=\sum_{i=1}^M G_i$. Since sums of independent Gammas are Gamma,
\[
S\sim \mathrm{Gamma}(kM,1),\qquad \E S = kM,\qquad \bar G=S/M.
\]

\begin{definition}
\label{def:subgamma}
A real-valued centered random variable $X$ is said to be sub-gamma with variance factor $v$
and scale parameter $c$ if
\[
\log \E e^{\lambda X} \leq \frac{\lambda^2v}{2(1-c\lambda)}
\]
for every $\lambda$ such that $0 < \lambda < 1/c$.
\end{definition}

\begin{lemma}[\cite{concentration_inequalities}, Section 2.4]
\label{thm:subgamma_concentration}
Let $X$ be a sub-gamma random variable with parameters $v$ and $c$. Then, for $t>0$,
\[
\bP\!\left( |X| \geq \sqrt{2vt}+ct\right) \leq 2e^{-t}.
\]
Moreover, if $Y\sim \mathrm{Gamma}(a, 1/b)$, then $X=Y-\E Y$ is sub-gamma with variance factor $ab^2$ and scale parameter $b$.
\end{lemma}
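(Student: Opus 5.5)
The statement is Lemma~\ref{thm:subgamma_concentration}, cited from \cite{concentration_inequalities}; it has two parts, a tail bound and a membership claim for centered Gamma variables. I would prove each directly.

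\textbf{Tail bound.} Let $X$ be sub-gamma with parameters $v$ and $c$, and define $\psi(\lambda)=\frac{\lambda^2 v}{2(1-c\lambda)}$ for $0<\lambda<1/c$. Markov's inequality applied to $e^{\lambda X}$ gives the Chernoff bound
\[
\bP(X\ge s)\le \exp\bigl(-\psi^*(s)\bigr),\qquad \psi^*(s)=\sup_{0<\lambda<1/c}\bigl(\lambda s-\psi(\lambda)\bigr).
\]
The key computation is the closed form of this Legendre transform:
\[
\psi^*(s)=\frac{v}{c^2}\,h_1\!\left(\frac{cs}{v}\right),\qquad h_1(u)=1+u-\sqrt{1+2u}.
\]
Inverting it gives $(\psi^*)^{-1}(t)=\sqrt{2vt}+ct$. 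Then $\bP\bigl(X\ge \sqrt{2vt}+ct\bigr)\le e^{-t}$.

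Rather than compute the supremum exactly, it is enough to plug in one good $\lambda$. Take $s=\sqrt{2vt}+ct$ and
\[
\lambda=\frac{\sqrt{2t/v}}{1+c\sqrt{2t/v}},
\]
which lies in $(0,1/c)$. One checks directly that $\lambda s-\psi(\lambda)\ge t$. This algebraic verification is the only mildly fiddly step.

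\textbf{Lower tail.} The sub-gamma definition as stated only controls $\lambda>0$, so it says nothing directly about $-X$. For the two-sided bound $\bP(|X|\ge\cdot)\le 2e^{-t}$ I would either assume the same condition for $-X$, which is how the reference intends it, or verify it in the application. In our application $-X$ is actually sub-Gaussian with factor $v$. Indeed, $\log\E e^{-\lambda(Y-a)}=a(\lambda-\log(1+\lambda))\le a\lambda^2/2$, which satisfies the sub-gamma bound for any $c$. A union bound over the two tails then gives the factor $2$.

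\textbf{Gamma membership.} Let $Y\sim\mathrm{Gamma}(a,\text{scale } b)$, so $\E Y=ab$. For $\lambda<1/b$,
\[
\log\E e^{\lambda(Y-ab)}=a\bigl(-\log(1-b\lambda)-b\lambda\bigr).
\]
Set $u=b\lambda\in(0,1)$. The elementary inequality
\[
-\log(1-u)-u=\sum_{j\ge2}\frac{u^j}{j}\le \frac{u^2}{2}\sum_{j\ge0}u^j=\frac{u^2}{2(1-u)}
\]
gives the bound $\frac{\lambda^2(ab^2)}{2(1-b\lambda)}$. This is exactly the sub-gamma condition with $v=ab^2$ and $c=b$.

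\textbf{Main obstacle.} The only genuine care needed is the lower-tail issue just described. The rest is routine Chernoff algebra.
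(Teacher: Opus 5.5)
Your proposal is correct and is the standard argument from the cited reference: a Chernoff bound with the explicit optimizer $\lambda=\sqrt{2t/v}\,/\,(1+c\sqrt{2t/v})$, which gives $\lambda s-\psi(\lambda)=t$ exactly, and the series bound $-\log(1-u)-u\le \frac{u^2}{2(1-u)}$ for the centered Gamma. The paper itself only cites the result rather than proving it. Your lower-tail remark is also right: the paper's Definition~\ref{def:subgamma} constrains only $\lambda>0$, so the two-sided bound needs the same condition on $-X$, which is how the reference defines a (two-sided) sub-gamma variable. You correctly verify that condition in the Gamma case via $u-\log(1+u)\le u^2/2$.
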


By Lemma \ref{thm:subgamma_concentration} and Definition \ref{def:subgamma}, we have that $\bar G - \E \bar G$ is sub-gamma with variance factor $k/M$ and scale parameter $1/M$. The concentration result yields
\begin{equation}
\label{eq:G_two_sided_correct}
\bP\!\left(|\bar G-k|\ge a_G(t)\right)\le 2e^{-t},
\end{equation}
where 
\[ a_G(t) = \sqrt{\frac{2kt}{M}} + \frac{t}{M}.\]

\paragraph{Step 2: Hoeffding for the bounded mean $\bar B$}
Since $0\le B_i\le kh^2$ a.s., Hoeffding's inequality \citep[Lemma 2.2]{concentration_inequalities} yields for all $t>0$,
\begin{equation}
\label{eq:B_two_sided}
\bP\!\left(|\bar B-\E B|\ge a_B(t)\right)\le 2e^{-t},
\qquad
a_B(t):=kh^2\sqrt{\frac{t}{2M}}.
\end{equation}

\paragraph{Step 3: Combine and normalize}
Combining \eqref{eq:union_GB}, \eqref{eq:G_two_sided_correct}, and \eqref{eq:B_two_sided} gives
\[
\bP\!\left(|\bar L-\E L|\ge a_G(t)+a_B(t)\right)\le 4e^{-t}.
\]
Choose $t=\log(4/\delta)$ so the right-hand side equals $\delta$.
Finally, since $\widehat\beta_{k,M}-\widehat \beta_k=(\bar L-\E L)/(kh)$, dividing the deviation by $kh$
yields \eqref{eq:eps_ch_correct}.
\end{proof}

\section{Improving the upper bound using band crossings}
\label{sec:band_crossing_approach}
In this section, we present a tour-constructing heuristic that improves upon the upper bound of the tuple optimization approach given by \citet{YuCarlsson2023} for any $k$ and $h$. Thus, any future reduction of the upper bound from the tuple optimization approach automatically translates to a further reduced upper bound using our strategy. We consider the same band partition of height $h/\sqrt{n}$ and $(k+1)$-tuples, but the key idea of our improvement procedure is to allow appending a vertex to another vertex in the adjacent band if they are close to the band boundary. This is in contrast to all previous approaches that constrain paths to traverse through vertices within one band at a time. 

As a concrete example, consider $k=3$, as shown in Figure \ref{fig:k3}. The original approach of \citet{BHH} traverses the vertices in a left-to-right order, as indicated by the brown path. The approach of \cite{YuCarlsson2023} also considers the non-identity permutation for visiting the tuple, as indicated by the blue path. 
Our heuristic allows for a third option indicated by the pink path, where the vertices $u$ and $v$ that are close to each other across the band boundary is connected by a $2$-cycle, and the remaining vertices in the tuple are connected by a separate path. The $2$-cycle does not induce a valid TSP tour, but it upper-bounds the optimal TSP tour since a tour can be constructed by the following modification: remove one of the $(u, v)$ edges, remove the edge $(v, w)$, and insert the edge $(u, w)$.  
\begin{figure}[t]
\centering
\includegraphics[width=0.9\linewidth]{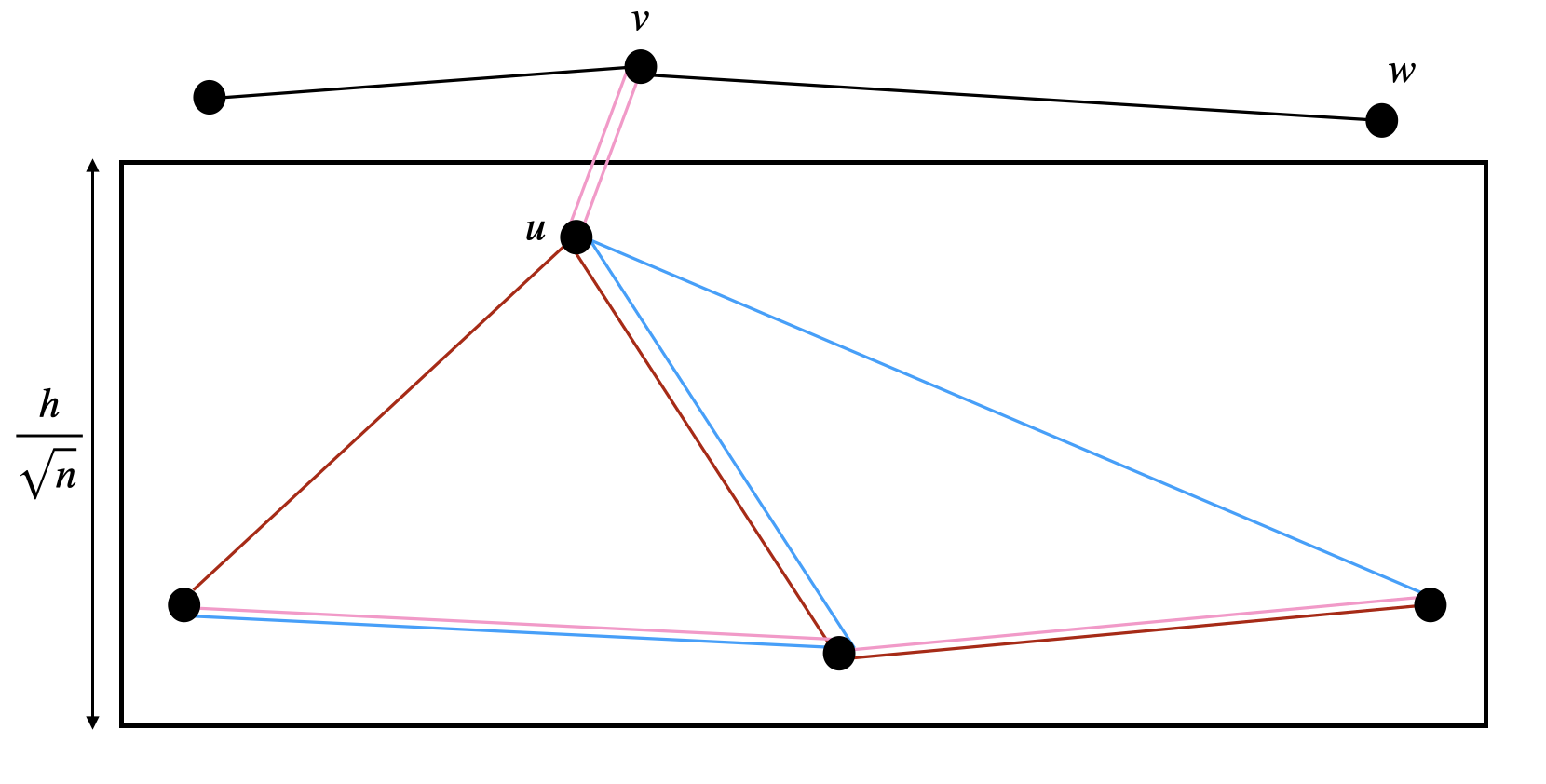}
\caption{By allowing band crossovers, there exist $3$ options to traverse the vertices in a $(k+1)$-tuple when $k=3$. The brown path is induced by the identity permutation $1234$, the blue path is induced by the permutation $1324$, and the pink path traverses the first, third, and fourth vertices in the tuple while appending vertex $u$ to the adjacent band.}\label{fig:k3}
\end{figure}
Due to the triangle inequality,
\[ \| (u, w) \| + \| (u, v) \| \leq 2\| (u, v) \| + \| (v, w) \|. \]
We apply this crossing strategy starting from the bottom band of the square and work upwards until the last band. For every tuple, we opt for the band crossing when its length beats any of the permutations traversing through the tuple. As a result, our procedure is guaranteed to improve $\widehat \beta_k$ for any $k$ and $h$.

\subsection{Simulation results}
We denote $\widetilde \beta_k$ as the upper bound on $\beta$ derived using the band crossover strategy for $k$. First, we estimate $\widetilde \beta_k$ via its corresponding Monte Carlo estimator $\widetilde \beta_{k, M}$, where $M$ is the number of replicates.   
Denote 
\[ J^\star~=~\argmax_{j\in\{1, \dots, k-1\}} U_j \]
as the highest interior vertex of the tuple and denote $\Xi_k$ as the set of all permutations $\xi$ of $\{0, \dots, k\} \setminus J^\star$ such that $\xi(0)=0$ and $\xi(k) = k$. Then,

\begin{align}
    \widetilde \beta_k &\leq \frac{1}{kh}\E\Bigg[  \min\Bigg\{ \min_{\pi\in\Pi_k}\sum_{i=1}^k \Bigg\| \begin{pmatrix}
        X_{\pi(i)} -  X_{\pi(i-1)} \\ h^2(U_{\pi(i)} -  U_{\pi(i-1)}) 
    \end{pmatrix} \Bigg\|,  \nonumber \\
    &\quad  2h \Bigg\| \begin{pmatrix}
        R\cos\theta \\  h(1-U_{J^\star}) + R\sin \theta
    \end{pmatrix} \Bigg\| + \min_{\xi\in\Xi_k}\sum_{i=1}^{k-1} \Bigg\| \begin{pmatrix}
        X_{\pi(i)} -  X_{\pi(i-1)} \\ h^2(U_{\pi(i)} -  U_{\pi(i-1)}) 
    \end{pmatrix} \Bigg\| \Bigg\}  \Bigg], \label{eq:band_exp}
\end{align}
where $R$ is the distance of the nearest vertex in the adjacent band to the projection of the $J^\star$-th vertex to the bound boundary, and $\theta$ is the angle of such neighbor relative to the boundary, uniformly distributed on $[-\pi/2, \pi/2]$. Conditioned on $J^\star$, the event $R/\sqrt{n} \geq t / \sqrt{n}$ for any $t>0$ is equivalent to the event that there exist no vertices in a half-circle of radius $t / \sqrt{n}$. The number of vertices in such half-circle is distributed as a Poisson random variable with parameter $\pi t^2/2$. Hence, $\bP(R \geq t) = \exp(-\pi t^2 /2)$, i.e., $R$ is distributed as a Rayleigh random variable with scale parameter $1/\sqrt\pi.$
The first inner minimum in \eqref{eq:band_exp} is the optimal permutation to visit all $(k+1)$ vertices within band using the tuple optimization approach.
The first term in the second argument of the outer minimum in \eqref{eq:band_exp} is the length of the 2-cycle using the band crossover approach, and the second inner minimum is the optimal permutation to visit the remaining vertices in the tuple.

Denote the expression inside the expectation in \eqref{eq:band_exp} as $\widetilde L$ and $\widetilde L_i$ as its independent copy for the $i$-th replicate. 
Then, $\widetilde \beta_k$ can estimated using Monte Carlo via
\begin{align}
    \widetilde \beta_{k, M} &\leq \frac{1}{Mkh}\sum_{\ell=1}^{M}\widetilde L_i. \nonumber
\end{align}
Table \ref{tab:mc_band_k4} shows the numerical results for $k=4$. The band crossover strategy yields an improvement of $0.011$ to $0.019$ across varying $h$ over the tuple optimization approach. The best bound of $0.868131$ was achieved at $h^2=4.00.$
\begin{table}[htbp]
  \centering
  \caption{Simulation results for the band crossover approach for $k=4$ and varying $h$ using $10^7$ replicates.}
  \label{tab:mc_band_k4}
  \begin{tabular}{ccccc}
    \hline
    $h^2$ & $\widetilde \beta_{k, M}$ & Std. error (of $\widetilde \beta_{k, M}$) & $\widehat \beta_{k, M}$ & Improvement\\
    \hline
    3.00 & 0.878662 & 0.000090 & 0.890205 & 0.011543 \\
    3.25 & 0.873754 & 0.000088 & 0.886547 & 0.012793 \\
    3.50 & 0.870653 & 0.000086 & 0.884783 & 0.014130 \\
    3.75 & 0.868938 & 0.000085 & 0.884487 & 0.015549 \\
    4.00 & 0.868131 & 0.000084 & 0.884916 & 0.016785 \\
    4.25 & 0.868511 & 0.000083 & 0.886460 & 0.017949 \\
    4.50 & 0.869483 & 0.000082 & 0.888705 & 0.019222 \\
    \hline
  \end{tabular}
\end{table}
Fixing $h^2=4.00$, increasing $k$ lowers the upper bound further. As indicated in Table \ref{tab:mc_band_vary_k}, we obtain the best bound of $0.849514$ at $k=8$.
\begin{table}[htbp]
  \centering
  \caption{Simulation results for the band crossover approach for $h^2=4.00$ and varying $k$ using $10^7$ replicates.}
  \label{tab:mc_band_vary_k}
  \begin{tabular}{ccc}
    \hline
    $k$ & $\widetilde \beta_{k, M}$ & Standard error \\
    \hline
    5 & 0.859650 & 0.000075 \\
    6 & 0.854624 & 0.000069 \\
    7 & 0.851439 & 0.000064 \\
    8 & 0.849514 & 0.000060 \\
    \hline
  \end{tabular}
\end{table}

Observe that Theorem \ref{thm:subgamma_concentration} applies to our proposed approach, as the same sandwiching relation holds. The optimal path length in \eqref{eq:band_exp} is also lower-bounded by the horizontal span between the endpoints and upper-bounded by the identity path. 
\begin{corollary}
    \label{cor:conc_beta_crossover} Fix any $k$, $M$ and $h$. For any $\delta>0$ we have that
    \begin{align}
    & \bP\left( |\widetilde \beta_{k, M}  - \widetilde\beta_k| \geq  \varepsilon \right)  \leq \delta, \nonumber
\end{align}
where $\varepsilon$ is defined in Theorem \ref{thm:conc_beta_chernoff_correct}.
\end{corollary}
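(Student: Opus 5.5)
The plan is to repeat the proof of Theorem~\ref{thm:conc_beta_chernoff_correct} verbatim, with $L$ replaced by $\widetilde L$. The only thing that needs checking is the sandwich relation
\[
G \le \widetilde L \le G + kh^2, \qquad G = X_k - X_0 \sim \mathrm{Gamma}(k,1).
\]
Once this holds, we write $\widetilde L = G + \widetilde B$ with $0 \le \widetilde B \le kh^2$. The decomposition \eqref{eq:union_GB}, the sub-gamma bound from Lemma~\ref{thm:subgamma_concentration} applied to $\bar G$, and Hoeffding's inequality applied to $\bar{\widetilde B}$ then go through unchanged. Choosing $t=\log(4/\delta)$ and dividing by $kh$ gives the same $\varepsilon$ as in \eqref{eq:eps_ch_correct}.

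\textbf{Upper bound.} Here $\widetilde L$ is the minimum of two arguments. The first argument is exactly the quantity $L$ from the tuple optimization approach, and the proof of Theorem~\ref{thm:conc_beta_chernoff_correct} already showed $L \le G + kh^2$ via the identity permutation. Since a minimum is at most any of its arguments, $\widetilde L \le L \le G+kh^2$.

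\textbf{Lower bound.} Each argument must be shown to be at least $G$.
\begin{itemize}
\item For the first argument this is the horizontal-span fact already used in the theorem.
\item For the second argument, the inner minimum over $\xi\in\Xi_k$ is a path from vertex $0$ to vertex $k$ through the remaining vertices. Its horizontal components telescope in absolute value to at least $X_k - X_0 = G$, so this path alone has length at least $G$.
\item The 2-cycle term $2h\|\cdot\|$ is a norm and hence nonnegative, so adding it preserves the bound.
\end{itemize}
Thus both arguments are at least $G$, and so is their minimum.

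\textbf{Scaling check.} I expect the main (minor) obstacle to be confirming the normalization. The first coordinates in \eqref{eq:band_exp} are measured on the same scale as the $X_i$, so the horizontal-span argument applies directly. The $R\cos\theta$ term appears only inside the nonnegative 2-cycle term, so it never enters the lower bound. Because $R$ is unbounded, $\widetilde L$ is not obviously dominated by the identity path. The upper bound nonetheless survives through the minimum with the first argument, and it is worth stating that reason explicitly.

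Finally, the $\widetilde L_i$ are i.i.d. because the replicates are independent. The argument uses only i.i.d.-ness and the sandwich, so no property of the distribution of $(R,\theta,J^\star)$ is needed.
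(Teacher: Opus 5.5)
Your proposal is correct and follows the same route as the paper. The paper's justification is a single remark: the sandwich $G \le \widetilde L \le G + kh^2$ still holds, with the horizontal span below and the identity path above, so the proof of Theorem~\ref{thm:conc_beta_chernoff_correct} carries over unchanged. You make explicit what the paper leaves implicit: the upper bound comes through the first argument of the minimum (needed because $R$ is unbounded), and the inner path over $\Xi_k$ still joins vertex $0$ to vertex $k$ because $J^\star$ is an interior index.
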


Using Corollary \ref{cor:conc_beta_crossover} with $\delta=0.001, M=10^7, h^2=4.00, k=4$, we have $\varepsilon \approx 1.6100\times 10^{-3}$. With $k=8$, we have $\varepsilon \approx 1.5157 \times 10^{-3}$, suggesting these Monte Carlo estimates are tight.

\subsection{Valid upper bound using the band crossover strategy}

In this section, we rigorously evaluate the upper-bound in \eqref{eq:band_exp} to yield a valid improvement over the $0.90380$ bound given by the tuple optimization approach. First, by the triangle inequality, we have
\begin{align}
    \widetilde \beta_k &\leq \frac{1}{kh}\E\Bigg[  \min\Bigg\{ \min_{\pi\in\Pi_k}\sum_{i=1}^k \Bigg\| \begin{pmatrix}
        X_{\pi(i)} -  X_{\pi(i-1)} \\ h^2(U_{\pi(i)} -  U_{\pi(i-1)}) 
    \end{pmatrix} \Bigg\|,  \nonumber \\
    &\quad  2h( R+h-hU_{J^\star}) + \min_{\xi\in\Xi_k}\sum_{i=1}^{k-1} \Bigg\| \begin{pmatrix}
        X_{\pi(i)} -  X_{\pi(i-1)} \\ h^2(U_{\pi(i)} -  U_{\pi(i-1)}) 
    \end{pmatrix} \Bigg\| \Bigg\}  \Bigg], \nonumber \\
    &\stackrel{\Delta}{=} \frac{1}{kh} \E\left[ \widetilde L(X_0, \dots, X_k, U_0, \dots, U_k, R) \right]. \label{eq:band_exp_compact}
\end{align}

Denote $v=(x_0, \dots x_k, u_0, \dots, u_k, r)$ and $g(v)$ as the joint density of $(X_0, \dots, X_k, U_0, \dots, U_k, R).$ Then, \eqref{eq:band_exp_compact} is equivalent to writing

\begin{align}
    \widetilde \beta_k &\leq \frac{1}{kh}\int_\cD \widetilde L(v) g(v) dv, \label{eq:band_exp_int}
\end{align}
where the domain of integration is 
\[ \cD = \{ (x_0, \dots x_k, u_0, \dots, u_k, r): 0 \leq x_0 \leq \dots \leq x_k, 0\leq u_i \leq 1, 0\leq r \}. \]
Suppose we have a region $\cS \subset \cD$ in which the band crossover is preferable to the tuple optimization. Then, we can improve the tuple optimization bound by writing 
\begin{align}
    \widetilde \beta_k &\leq \frac{1}{kh}\int_\cD \widetilde L(v) g(v) dv \nonumber \\
    &= \frac{1}{kh}\int_{\cD \setminus \cS} \widetilde L(v) g(v) dv + \frac{1}{kh}\int_\cS \widetilde L(v) g(v) dv \nonumber \\
    &\leq \frac{1}{kh}\int_{\cD \setminus \cS}  L(v) g(v) dv + \frac{1}{kh}\int_\cS \widetilde L(v) g(v) dv \nonumber \\
    &= \frac{1}{kh}\int_{\cD}  L(v) g(v) dv - \frac{1}{kh} \left( \int_{\cS}  L(v) g(v) dv - \int_\cS \widetilde L(v) g(v) dv \right) \nonumber \\
    &= \widehat \beta_k - \frac{1}{kh} \left( \int_{\cS}  L(v) g(v) dv - \int_\cS \widetilde L(v) g(v) dv \right). \label{eq:beta_tilde_decomp}
\end{align}
The first term in \eqref{eq:beta_tilde_decomp} was upper-bounded by \cite{YuCarlsson2023} using the tuple optimization approach. The second term inside the parentheses quantifies the improvement realized by band crossovers over some region $\cS$. Since \cite{YuCarlsson2023} obtained their best bound of $0.90380$ at $k=4$ and $h^2=3.25$, it follows that 
\begin{align}
    \widetilde \beta_4 &\leq 0.90380 - \frac{1}{4\sqrt{3.25}}\left( \int_{\cS}  L(v) g(v) dv - \int_\cS \widetilde L(v) g(v) dv \right). \label{eq:beta_tilde_4_decomp}
\end{align}
It remains to rigorously lower bound the improvement inside the parentheses in \eqref{eq:beta_tilde_4_decomp} under the same set of parameters. For the rest of the section, we fix $k=4$. 

It is intuitive to see that band crossovers can yield potential savings when one interior vertex of the tuple is located near the top boundary and the remaining vertices are all close to the bottom of the band, as illustrated in Fig. \ref{fig:k4}. 
\begin{figure}[t]
\centering
\includegraphics[width=0.9\linewidth]{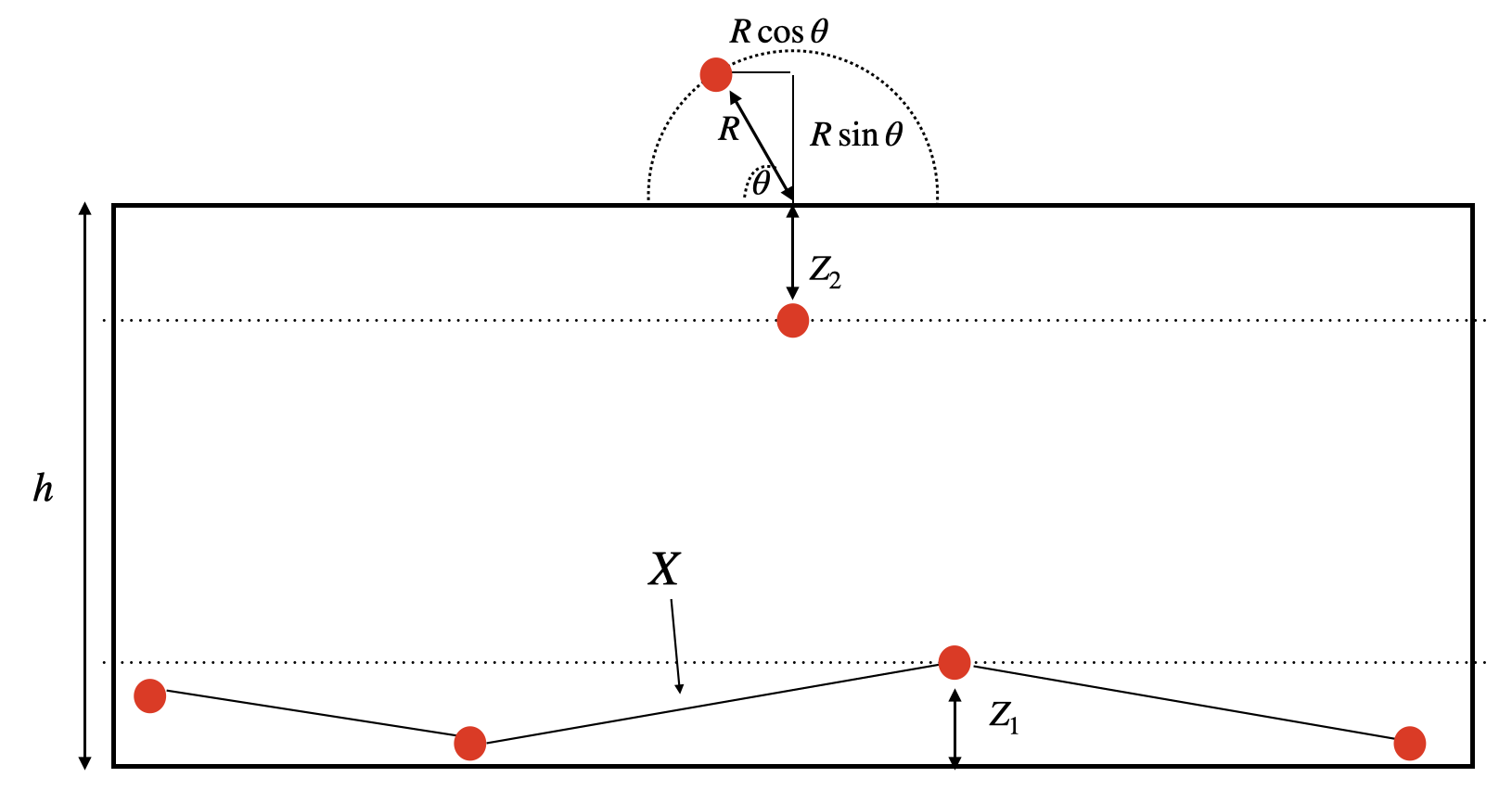}
\caption{An example configuration satisfying $\cS$ for $k=4$}\label{fig:k4}
\end{figure}
Therefore, we define $\cS$ to be the set of such configurations. Let $I$ be the indicator that an interior vertex is the highest vertex among the tuple. Let $Z_1$ be the height of the second highest vertex measured from the bottom of band and let $Z_2$ be the height of the highest vertex measured from the top of the band. Denote $u$ as the highest interior vertex. 
Let $X$ be the optimal path 
through the 4 remaining points excluding $u$. We have, due to the triangle inequality, an upper bound on $\widetilde L(v)$: 
\begin{align}
        \E[\widetilde L(v) | I,  Z_1 = z_1, Z_2=z_2, R=r] &\leq \E[X | I, Z_1=z_1] + 2(z_2+r). \label{eq:crossover_length}
\end{align}

On the other hand, we have a lower bound involving $\E[X | I, Z_1=z_1]$ on $L(v)$ based on the following lemma.
\begin{lemma}
    \begin{align}
        \E[L(v) | I, Z_1 = z_1, Z_2=z_2, R=r] &\geq \E[X | I, Z_1=z_1] + g(h-z_1-z_2), \label{eq:tuple_length}
    \end{align}
    where $g(c) = \sqrt{2}c - \frac{\sqrt{2}-1}{\sqrt{2}} \cdot \frac{3.75}{h}$
\end{lemma}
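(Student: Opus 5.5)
The plan is to compare the optimal tuple path $L(v)$ with a path on the four points other than $u$, obtained by deleting $u$ and shortcutting. Then I lower-bound the detour cost of $u$ using the vertical gap $c := h - z_1 - z_2$ between $u$ and every other point of the tuple.

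Throughout, on the event $I$ the vertex $u$ is interior in the tuple ordering. Every $\pi\in\Pi_4$ fixes the two endpoints, so in the optimal permutation $u$ is visited strictly between two neighbours $a$ and $b$ from the remaining four points.

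\textbf{Step 1 (shortcutting).} Deleting $u$ and inserting the edge $(a,b)$ gives a valid path through the four remaining points with the same fixed endpoints. Its length is therefore at least the optimal such length, which is $X$. Hence, pointwise,
\[
L(v) \ \ge\ X + \big(\|au\| + \|ub\| - \|ab\|\big).
\]
It remains to lower-bound the detour $\Delta := \|au\|+\|ub\|-\|ab\|$.

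\textbf{Step 2 (geometric bound on the detour).} Let $d = |x_a - x_b|$ be the horizontal separation of the neighbours. Given $Z_1=z_1$ and $Z_2=z_2$, the point $u$ sits at height $h-z_2$ and $a,b$ sit at heights $y_a,y_b\in[0,z_1]$. The vertical gaps $h-z_2-y_a$ and $h-z_2-y_b$ are each at least $c$, and their sum is at least $2c + |y_a-y_b|$.

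I will use the convexity of $t\mapsto\sqrt{t^2+s^2}$ and split the horizontal distance $d$ between the two legs. This gives $\|au\|+\|ub\|\ge \sqrt{d^2+(2c+|y_a-y_b|)^2}$. On the other side, $\|ab\|\le d+|y_a-y_b|$.

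The extra vertical slack $|y_a-y_b|$ on the left compensates the term $|y_a-y_b|$ in $\|ab\|$, since $\sqrt{d^2+(s+e)^2}-e\ge\sqrt{d^2+s^2}-0$ fails only by a term that the linearization below absorbs. So it suffices to treat the case $y_a=y_b$, where $\Delta \ge \sqrt{d^2+4c^2}-d$.

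Next I apply Cauchy--Schwarz, $\sqrt{d^2+4c^2}\ge (d+2c)/\sqrt2$. This yields the linear bound
\[
\Delta \ \ge\ \sqrt2\,c - \frac{\sqrt2-1}{\sqrt2}\, d ,
\]
which is exactly the shape of $g$.

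\textbf{Step 3 (taking expectations).} The horizontal coordinates are independent of the vertical ones, and hence of $I$, $Z_1$, $Z_2$ and $R$. So $\E[d\mid I,Z_1,Z_2,R]=\E[d\mid I]$. The distance $d$ is at most the horizontal span between two points of the tuple, which is a sum of at most four of the exponential gaps in the paper's scaling. I expect bounding this expectation by $3.75/h$, in the normalization where the band height is $h$, to give the stated constant. Taking conditional expectations of Step 1 and Step 2 then yields \eqref{eq:tuple_length}.

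\textbf{Main obstacle.} The step I expect to be delicate is Step 3. One must get the units right between the horizontal scaling ($Z_i$ against $h^2 U_i$) and the vertical scaling, and justify $\E[d\mid I]\le 3.75/h$, where $d$ depends on which neighbours the optimal permutation selects. I would first try the crude bound that $d$ is at most the full span $X_4-X_0$, and refine if needed by noting that $a$ and $b$ lie on opposite sides of $u$ or are nearby in the ordering. A secondary check is the slack argument in Step 2 when $y_a\ne y_b$.
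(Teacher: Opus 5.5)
Your overall structure matches the paper's: shortcut $u$, bound the detour $\Delta$ below by something linear in $c=h-z_1-z_2$ and a horizontal extent, then bound that extent in expectation. Your Minkowski merge $\|au\|+\|ub\|\ge\sqrt{d^2+(2c+e)^2}$, with $e=|y_a-y_b|$, is a tidier replacement for the paper's three-case analysis. It handles all relative positions of $u,a,b$ at once, and it produces $d=|x_a-x_b|$, which is never larger than the paper's $A+B$.

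However, the reduction to $e=0$ does not work as you justify it. Once you replace $\|ab\|$ by $d+e$, too much is lost, and no later step can recover it. Take $c=0.5$, $d=e=1$ (realizable, e.g.\ with $h=\sqrt{3.25}$, $u$ at height $1.5$, $a$ at height $1$, $b$ at height $0$). Your remaining bound is $\sqrt{d^2+(2c+e)^2}-d-e=\sqrt{5}-2\approx 0.236$, below the target $\sqrt{2}\,c-\frac{\sqrt{2}-1}{\sqrt{2}}d\approx 0.414$. Instead, keep $\|ab\|=\sqrt{d^2+e^2}$ exactly. The map $e\mapsto\sqrt{d^2+(2c+e)^2}-\sqrt{d^2+e^2}$ has derivative $\frac{2c+e}{\sqrt{d^2+(2c+e)^2}}-\frac{e}{\sqrt{d^2+e^2}}\ge 0$, since $t\mapsto t/\sqrt{d^2+t^2}$ is increasing, so its minimum is at $e=0$. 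This is precisely the paper's derivative-in-$D$ step.

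The genuine gap is Step 3. First, the claim $\E[d\mid I,Z_1,Z_2,R]=\E[d\mid I]$ is false. The optimal permutation selects $a$ and $b$, and it depends on the heights, so $d$ is not independent of them. Second, the crude bound $d\le X_4-X_0$ only gives $\E d\le 4/h$ in the lemma's units. That proves a weaker lemma with $4$ in place of $3.75$.

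The missing observation is this: the optimal path has exactly five vertices with fixed endpoints and $u$ is interior, so $u$'s two neighbours cannot both be endpoints. Hence $\{a,b\}$ omits vertex $0$ or vertex $4$. Write $\gamma_1,\dots,\gamma_4$ for the horizontal gaps, which are i.i.d.\ exponential with mean $1/h$ in the lemma's units. You then get the pointwise bound $d\le \gamma_1+\gamma_2+\gamma_3+\gamma_4-\min(\gamma_1,\gamma_4)$. That quantity is at most the sum of the three largest gaps, which is the paper's dominating variable.

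Both dominating variables depend only on the horizontal gaps, which are independent of the heights, of $I$ and of $R$. So conditioning does not change their expectations, which are $3.5/h$ and $3.75/h$ respectively. It is this pointwise domination by a horizontal-only variable, not independence of $d$ itself, that legitimately removes the conditioning and yields the constant in $g$.
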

\begin{proof}
    For brevity, we condition on $I$, $Z_1 = z_1, Z_2=z_2, R=r$ throughout this proof. 
    Under the optimal path $L$ through all $5$ points of the tuple, we have a $2$-path $(v, u, w)$ for some vertices $v, w$ in the bottom region of the band. Deleting these two edges and adding the edge $(v, w)$ yields a path through the bottom $4$ points, which is an upper bound on the length of the optimal path through the bottom $4$ points. Since $X$ is defined to be optimal, it follows that 
    \begin{align}
        X \cdot I &\leq \left(L - \|(v, u)\| - \|(u, w)\| + \|(v, w)\|\right) \cdot I. \nonumber
    \end{align}
    Denote $\Delta:= \left( \|(v, u)\| + \|(u, w)\| - \|(v, w)\| \right) \cdot I $. 
    Due to symmetry, there are $3$ possible configurations of $u, v, w$, as shown in Fig. \ref{fig:triangle}. 

\begin{figure}[t]
\centering
\includegraphics[width=0.95\linewidth]{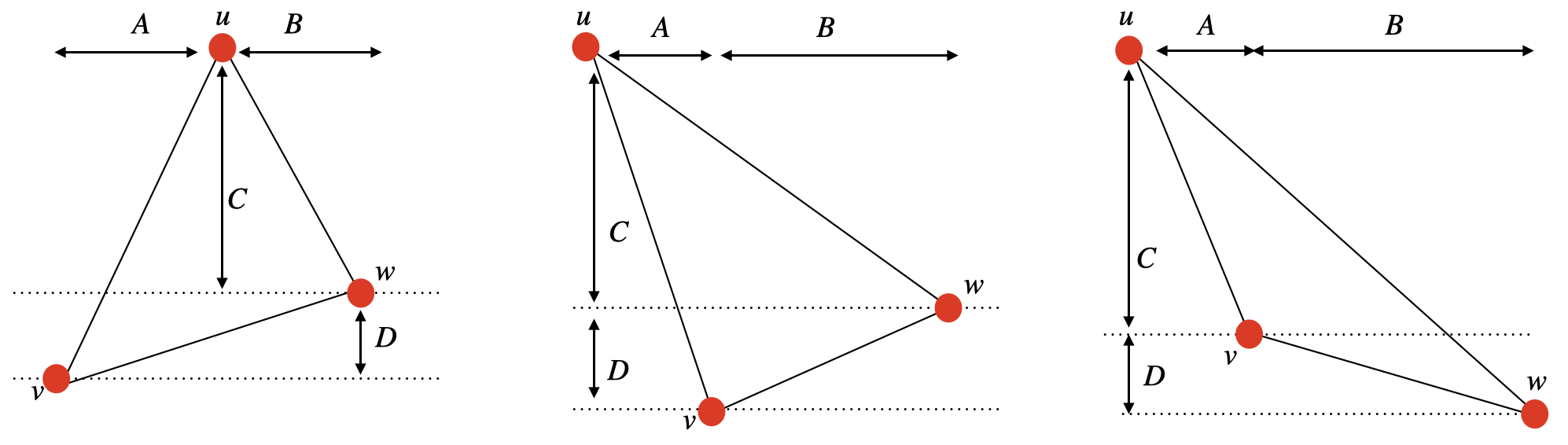}
\caption{Possible configurations of $u, v, w$.}\label{fig:triangle}
\end{figure}

\paragraph{Case 1} 
\begin{align*}
    \Delta 
    &= \sqrt{A^2+(C+D)^2} + \sqrt{C^2 + B^2} - \sqrt{D^2 + (A+B)^2} 
\end{align*}
\paragraph{Case 2} 
\begin{align*}
    \Delta 
    &= \sqrt{A^2+(C+D)^2} + \sqrt{C^2 + (A+B)^2} - \sqrt{D^2 + B^2}
\end{align*}
\paragraph{Case 3} 
\begin{align*}
    \Delta 
    &= \sqrt{A^2+C^2} + \sqrt{(A+B)^2+(C+D)^2} - \sqrt{D^2 + B^2} 
\end{align*}
Since $A$ and $B$ are interchangeable, Case 1 lower-bounds the other two cases. Differentiating the RHS in Case 1 with respect to $D$ yields
\[ \frac{C+D}{\sqrt{A^2+(C+D)^2}} - \frac{D}{\sqrt{D^2 + (A+B)^2}} \geq 0, \]
so the bound is increasing in $D$. Setting $D=0$ and using the bound $\sqrt{x^2+y^2} \geq (x+y) / \sqrt{2}$ yields a uniform lower bound:
\begin{align*}
    \Delta &\geq \sqrt{A^2+C^2} + \sqrt{C^2 + B^2} - (A+B) \\
    &\geq \frac{1}{\sqrt{2}}(A + B + 2C) - (A+B) \\
    &= \sqrt{2}C - \frac{\sqrt{2}-1}{\sqrt{2}}(A+B).
\end{align*}
We note that $h(A+B)$ is stochastically dominated by the sum of the three largest exponential random variables with parameter $h$ among four. The expectation of the quantity is the same as the expectation of $Erlang(4,h)$ minus the expectation of the minimum of 4 independent exponential random variables with parameter $h$. Therefore, 
\[ \E[A+B] \leq \frac{1}{h}\left( 4 - \frac{1}{4} \right) = \frac{3.75}{h}. \]
Since $C\geq  h - Z_1 - Z_2$ due to the conditioning event, taking the expectation of $\Delta$ yields
\begin{align*}
    \E\Delta &\geq \sqrt{2}(h-z_1-z_2) - \frac{\sqrt{2}-1}{\sqrt{2}} \cdot \frac{3.75}{h},
\end{align*}
which concludes the proof.

\end{proof}

Set $\alpha= \frac{1-\sqrt{2}}{\sqrt{2}} \cdot \frac{3.75}{h}$. Our procedure is guaranteed to yield an improvement whenever the path length using the band crossover in \eqref{eq:crossover_length} is shorter than that using the tuple optimization in \eqref{eq:tuple_length}, which is implied when $2(z_2+r) < \sqrt{2}(h-z_1-z_2) + \alpha$.
Denote $I_1, I_2, I_3$ as the indicators that the highest point is the first, second, and third interior vertex, respectively. Since we can conduct the improvement procedure when the highest vertex is one of these three vertices, the expected improvement is 
\begin{align}
    &\E\left[\max\left\{0, \sqrt{2}(h-z_1-z_2) - 2(z_2+r) + \alpha \right\}(I_1 + I_2 + I_3)\right] \nonumber \\
    &= 3\E\left[\max\left\{0, \sqrt{2}(h-z_1-z_2) - 2(z_2+r) + \alpha\right\}I_1\right] \nonumber\\
    &= 3\E\left[\max\left\{0, \sqrt{2}(h-z_1-z_2) - 2(z_2+r) + \alpha \right\}\mid I_1\right]\bP(I_1=1) \nonumber\\
    &= \frac{3}{5}\E\left[\max\left\{0, \sqrt{2}(h-z_1-z_2) - 2(z_2+r) + \alpha \right\}\}\mid I_1\right] \nonumber\\
    &= \frac{3}{5}\E\left[\max\left\{0, \sqrt{2}(h-z_1-z_2) - 2(z_2+r) + \alpha \right\}\right]. \label{eq:E_max_intermediate}
\end{align}
The conditioning is dropped in \eqref{eq:E_max_intermediate} because the joint distribution of $(Z_1, Z_2, R)$ is the same as the joint conditional distribution of $(Z_1, Z_2, R) | \{I_1 = 1\}$. 

For $z_2 \leq h$, the CDF of $Z_2$ is given by
    \[P(Z_2 \leq z_2) = 1 - \left(\frac{h-z_2}{h}\right)^5,\]
    so the PDF of $Z_2$ is $f_{Z_2}(z_2) = 5(h-z_2)^4 / h^5$. The CDF of $Z_1$ given that $Z_2 = z_2$ is, for $0 \leq z_1 \leq h-z_2$,
    \[P(Z_1 \leq z_1 | Z_2 = z_2) =  \left(\frac{z_1}{h - z_2} \right)^4,\]
    and the conditional density of $Z_1$ given $Z_2 = z_2$ is 
    \[f_{Z_1|Z_2}(z_1 |z_2) =  4 \left(\frac{1}{h - z_2} \right)^4 \cdot z_1^3.\]
    Finally, $R$ is independent of $Z_1, Z_2$. Therefore, the joint density of $(Z_1, Z_2, R)$ evaluated at $(z_1, z_2, r)$ is given by
    \begin{align}
    f_{Z_2}(z_2) \cdot f_{Z_1 | Z_2}(z_1 | z_2) \cdot f_R(r) &=   \frac{20}{h^5}  \mathbbm{1}\{z_1 + z_2 \leq h, z_1, z_2\geq0 \} z_1^3 \pi r e^{-\frac{\pi}{2} r^2} \label{eq:joint_density}
    \end{align}

Denote $(x)_+ = \max(0, x).$ Plugging \eqref{eq:joint_density} into \eqref{eq:E_max_intermediate} and using change-of-variables, our expected improvement is at least
\begin{align}
    &\frac{3\cdot20}{5h^5} \int_0^h \int_0^h \int_0^\infty \left(\sqrt{2}(h-z_1-z_2) - 2(z_2+r) + \alpha \right)_+ \nonumber \\
    &\quad \mathbbm{1}\{z_1 + z_2 \leq h \} z_1^3 \pi r e^{-\frac{\pi}{2} r^2} dr dz_2 dz_1 \nonumber \\
    &= 12\int_0^1 \int_0^{1-u_1} \int_0^\infty \left(\sqrt{2}(h-hu_1-hu_2) - 2(hu_2+r) + \alpha \right)_+ \nonumber \\
    &\qquad  u_1^3 \pi r e^{-\frac{\pi}{2} r^2} dr du_2 du_1. \label{eq:3d_int}
\end{align}

To evaluate the inner integral with respect to $r$, we use the following identity.
\begin{equation}
    F(a) := \int_0^a (a-r)\pi r e^{-\pi r^2 / 2} dr = a - \frac{1}{\sqrt{2}} \erf\left(\sqrt{\frac{\pi}{2}}a\right),  \label{eq:rayleigh_identity}
\end{equation}
where $\erf(\cdot)$ is the error function.
Let
\[ a(u_1, u_2) = \frac{h}{2} \left( \sqrt{2} - \sqrt{2}u_1 - (\sqrt{2}+2)u_2 \right) + \frac{\alpha}{2}. \]
Then, via \eqref{eq:rayleigh_identity}, \eqref{eq:3d_int} reduces to 
\begin{align}
    24 \int_0^1 \int_0^{1-u_1} u_1^3 F((a(u_1, u_2)_+) du_2 du_1. \label{eq:2d_int}
\end{align}

To simplify \eqref{eq:2d_int} further to only with respect to $u_1$, re-write $a(u_1, u_2) = A(u_1) - Bu_2$, where
\begin{align}
    A(u_1) &=  \frac{\sqrt{2}h(1-u_1) + \alpha}{2}  \nonumber \\
    B &= \frac{h(\sqrt{2}+2)}{2}. \nonumber
\end{align}
Since the integrand is nonnegative, the inner integral bounds reduce to 
\begin{align}
    &24 \int_0^1 u_1^3 \int_0^{1-u_1}  F((A(u_1)-Bu_2)_+) du_2 du_1. \nonumber
\end{align}
With $u_1$ fixed, we then conduct a change-of-variables with $a=A(u_1)-Bu_2$. 
Then, the inner integral becomes 
\begin{align}
 \int_0^{1-u_1}  F((A(u_1)-Bu_2)_+) du_2  &= - \frac{1}{B} \int_{A(u_1)}^{A(u_1)-B(1-u_1)}  F((a)_+) da \nonumber \\
 &=\frac{1}{B} \int_{A(u_1)-B(1-u_1)}^{A(u_1)} F((a)_+) da. \label{eq:F_simplified_pre}
\end{align}
Observe that 
\begin{align}
    A(u_1) - B(1-u_1) &= \frac{\sqrt{2}h(1-u_1) + \alpha}{2} - \frac{h(\sqrt{2}+2)}{2} (1-u_1) \nonumber \\
    &= \frac{\alpha}{2} - h(1-u_1) < 0~\text{for}~u_1\in[0, 1].
\end{align}
Therefore, we can truncate the lower integral bound and also modify the upper bound to simplify \eqref{eq:F_simplified_pre} further to 
\begin{equation}
    \frac{1}{B} \int_{0}^{(A(u_1))_+} F(a) da. \label{eq:F_simplified}
\end{equation}

Define $H(a) = \int_0^a F(t)dt$. Use the identity
\[ \int \erf(x)dx = x\erf(x) + \frac{e^{-x^2}}{\sqrt{\pi}} + C \]
to obtain 
\begin{align}
    H(a) &= \int_0^a \left[t - \frac{1}{\sqrt{2}}\erf\left(\sqrt{\frac{\pi}{2}}t\right) \right] dt \nonumber \\
    &= \int_0^{\sqrt{\frac{\pi}{2}}a} \left[\sqrt{\frac{2}{\pi}}x - \frac{1}{\sqrt{2}}\erf\left(x\right) \right] \sqrt{\frac{2}{\pi}} dx \nonumber \\
    &= \frac{2}{\pi} \cdot \frac{1}{2}x^2 \Big|_0^{\sqrt{\frac{\pi}{2}}a} - \frac{1}{\sqrt{\pi}}x\erf(x)\Big|_0^{\sqrt{\frac{\pi}{2}}a} - \frac{1}{\pi}e^{-x^2} \Big|_0^{\sqrt{\frac{\pi}{2}}a}\nonumber \\
    &=\frac{a^2}{2} - \frac{a}{\sqrt{2}}\erf\left(a\sqrt{\frac{\pi}{2}}\right) - \frac{1}{\pi}e^{-\frac{\pi}{2}a^2} + \frac{1}{\pi}. \label{eq:Ha_expression}
\end{align}
Combining \eqref{eq:F_simplified} and \eqref{eq:Ha_expression} reduces \eqref{eq:2d_int} to 
\begin{align}
    \eta &:= \frac{24}{B}\int_0^1 u_1^3 \int_{0}^{(A(u_1))_+} F(a) da du_1 \nonumber  \\
    &=\frac{24}{B}\int_0^1 u_1^3 H((A(u_1))_+)   du_1 \nonumber \\
    &= \frac{24}{B} \int_0^{u_1^\star} u_1^3 H(A(u_1))du_1, \label{eq:I_1d}
\end{align}
where $u_1^\star = \min\{1, 1 + \frac{\alpha}{\sqrt{2}h}\} = 1 + \frac{\alpha}{\sqrt{2}h}$.
Observe that $H'(a) = F(a) \geq 0$ for $a\geq 0$ so $H$ is increasing. By the chain rule,
\begin{equation}
    \frac{d}{du_1} H(A(u_1)) = H'(A(u_1)) A'(u_1)) = H'(A(u_1)) \cdot \frac{-h}{\sqrt{2}} \leq 0. \nonumber
\end{equation}
As a result, $H(A(u_1))$ is decreasing in $u_1$ on $[0, u_1^\star].$ Thus, we can rigorously lower-bound \eqref{eq:I_1d} using the following Riemann sum scheme. We fix a partition of $[0, u_1^\star]$:
\[ 0=u_1 < u_1 < \dots u_N=u_1^\star. \]
Since $u_1^3$ is increasing in $u_1$, $\eta$ is lower bounded by
\begin{align}
    &\eta \geq \frac{24}{B}\sum_{i=1}^N (u_i - u_{i-1})u_{i-1}^3 H(A(u_i)) \label{eq:I_numerical}
\end{align}
which plugs into \eqref{eq:beta_tilde_4_decomp} as
\begin{align}
    \widetilde \beta_4 &\leq 0.90380 - \frac{1}{4\sqrt{3.25}}\left(0.000987 \right) \leq 0.90367,
\end{align}
improving the bound of \cite{YuCarlsson2023} by $0.00013$.

\section{Conclusion}

The computer-aided tuple-optimization framework of \cite{YuCarlsson2023} is the best-to-date upper bound on the Euclidean TSP constant, and our results clarify both its current promise and its likely limitations. On the one hand, Monte Carlo experiments indicate that, even with perfect identification of the optimal permutation on each region of the integration domain, band-restricted tuple optimization appears to plateau around $0.86-0.88$ (depending on $k$), so that further computational refinement of the numerical analysis can plausibly chip away at the constant, but is unlikely to drive the bound close to the conjectured value $\beta\approx 0.71$. On the other hand, our band-crossover heuristic delivers a uniform improvement over tuple optimization for every $(k,h)$ by exploiting near-boundary geometry, and we rigorously certify a numerical improvement of the best known upper bound, reducing $0.90380$ to $0.90367$.

While the actual numerical improvement is modest, it should not be interpreted as a weakness of the underlying heuristic. Rather, it reflects the looseness introduced by the analytic inequalities required to make the upper-bounding argument rigorous — most notably, repeated applications of triangle-inequality type bounds that “pay” extra length exactly in the regimes where cross-band connections are most beneficial. In contrast, the Monte Carlo results show that allowing departures from the strict band constraint produces substantial empirical gains (on the order of $10^{-2}$ for $k=4$ and approaching $0.85$ for $k=8$), suggesting great promise in designing certifiable tour-construction heuristics that more fundamentally move beyond within-band traversal for further lowering of the upper bound for $\beta$.

\textbf{Acknowledgements} J. G. and C. G. were supported in part by NSF CAREER Award CCF-2440539. C. G. was supported in part by a Dr. John N. Nicholson Fellowship, Northwestern University.

\bibliographystyle{apalike}
\bibliography{refs.bib}

@article{GAUDIO202067,
title = {An improved lower bound for the Traveling Salesman constant},
journal = {Operations Research Letters},
volume = {48},
number = {1},
pages = {67-70},
year = {2020},
issn = {0167-6377},
doi = {https://doi.org/10.1016/j.orl.2019.11.007},
url = {https://www.sciencedirect.com/science/article/pii/S0167637719304924},
author = {Julia Gaudio and Patrick Jaillet}
}

@article{YuCarlsson2023,
author = {Carlsson, John Gunnar and Yu, Julien},
title = {A New Upper Bound for the {Euclidean} {TSP} Constant},
journal = {INFORMS Journal on Computing},
volume = {},
number = {},
pages = {},
year = {2024},
doi = {10.1287/ijoc.2024.0538},

URL = { 
    
        https://doi.org/10.1287/ijoc.2024.0538
    
    

},
eprint = { 
    
        https://doi.org/10.1287/ijoc.2024.0538
    
    

}
}

@article{BHH,
  title={The shortest path through many points},
  author={Jillian E. Beardwood and John H. Halton and J. M. Hammersley},
  journal={Mathematical Proceedings of the Cambridge Philosophical Society},
  year={1959},
  volume={55},
  pages={299 - 327},
  url={https://api.semanticscholar.org/CorpusID:122062088}
}

@article{Steinerberger,
 ISSN = {00018678},
 URL = {http://www.jstor.org/stable/43563459},
 author = {Stefan Steinerberger},
 journal = {Advances in Applied Probability},
 number = {1},
 pages = {27--36},
 publisher = {Applied Probability Trust},
 title = {NEW BOUNDS FOR THE TRAVELING SALESMAN CONSTANT},
 urldate = {2025-03-20},
 volume = {47},
 year = {2015}
}

@book{Applegate2006,
  author    = {David L. Applegate and Robert E. Bixby and V{\'{a}}clav Chv{\'{a}}tal and William J. Cook},
  title     = {The Traveling Salesman Problem: A Computational Study},
  publisher = {Princeton University Press},
  address   = {Princeton},
  year      = {2006},
  isbn      = {9780691129938}
}

@misc{concorde_tsp_solver,
  author       = {Applegate, David and Bixby, Robert E. and Chv{\'a}tal, Va{\v{s}}ek and Cook, William J.},
  title        = {Concorde {TSP} Solver},
  howpublished = {\url{https://www.math.uwaterloo.ca/tsp/concorde.html}},
  note         = {Version 03.12.19 (Dec 19, 2003). Accessed: 2026-01-05},
  year         = {2003}
}

@article{stein1977,
 ISSN = {00411655, 15265447},
 URL = {http://www.jstor.org/stable/25767916},
 author = {David M. Stein},
 journal = {Transportation Science},
 number = {3},
 pages = {232--249},
 publisher = {INFORMS},
 title = {Scheduling Dial-a-Ride Transportation Systems},
 urldate = {2026-01-05},
 volume = {12},
 year = {1978}
}

@article{ONG1989231,
title = {Asymptotic expected performance of some {TSP} heuristics: An empirical evaluation},
journal = {European Journal of Operational Research},
volume = {43},
number = {2},
pages = {231-238},
year = {1989},
issn = {0377-2217},
doi = {https://doi.org/10.1016/0377-2217(89)90217-8},
url = {https://www.sciencedirect.com/science/article/pii/0377221789902178},
author = {H.L. Ong and H.C. Huang}
}

@article{Fiechter1994APT,
  title={A Parallel Tabu Search Algorithm for Large Traveling Salesman Problems},
  author={Claude-Nicolas Fiechter},
  journal={Discret. Appl. Math.},
  year={1994},
  volume={51},
  pages={243-267},
  url={https://api.semanticscholar.org/CorpusID:41829903}
}

@article{LeeChoi1994,
  title = {Optimization by multicanonical annealing and the traveling salesman problem},
  author = {Lee, Jooyoung and Choi, M. Y.},
  journal = {Phys. Rev. E},
  volume = {50},
  issue = {2},
  pages = {R651--R654},
  numpages = {0},
  year = {1994},
  month = {Aug},
  publisher = {American Physical Society},
  doi = {10.1103/PhysRevE.50.R651},
  url = {https://link.aps.org/doi/10.1103/PhysRevE.50.R651}
}

@inproceedings{Johnson1996AsymptoticEA,
  title={Asymptotic experimental analysis for the {Held}-{Karp} traveling salesman bound},
  author={David S. Johnson and Lyle A. McGeoch and Edward E. Rothberg},
  booktitle={ACM-SIAM Symposium on Discrete Algorithms},
  year={1996},
  url={https://api.semanticscholar.org/CorpusID:1766406}
}

@article{PercusMartin,
  title = {Finite Size and Dimensional Dependence in the {Euclidean} Traveling Salesman Problem},
  author = {Percus, Allon G. and Martin, Olivier C.},
  journal = {Phys. Rev. Lett.},
  volume = {76},
  issue = {8},
  pages = {1188--1191},
  numpages = {0},
  year = {1996},
  month = {Feb},
  publisher = {American Physical Society},
  doi = {10.1103/PhysRevLett.76.1188},
  url = {https://link.aps.org/doi/10.1103/PhysRevLett.76.1188}
}

@article{VALENZUELA1997157,
title = {Estimating the {Held}-{Karp} lower bound for the geometric {TSP}},
journal = {European Journal of Operational Research},
volume = {102},
number = {1},
pages = {157-175},
year = {1997},
issn = {0377-2217},
doi = {https://doi.org/10.1016/S0377-2217(96)00214-7},
url = {https://www.sciencedirect.com/science/article/pii/S0377221796002147},
author = {Christine L. Valenzuela and Antonia J. Jones}
}

@article{JacobsenReadSaleur,
  title = {Traveling Salesman Problem, Conformal Invariance, and Dense Polymers},
  author = {Jacobsen, J. L. and Read, N. and Saleur, H.},
  journal = {Phys. Rev. Lett.},
  volume = {93},
  issue = {3},
  pages = {038701},
  numpages = {4},
  year = {2004},
  month = {Jul},
  publisher = {American Physical Society},
  doi = {10.1103/PhysRevLett.93.038701},
  url = {https://link.aps.org/doi/10.1103/PhysRevLett.93.038701}
}

@book{concentration_inequalities,
    author = {Boucheron, Stéphane and Lugosi, Gábor and Massart, Pascal},
    title = {Concentration Inequalities: A Nonasymptotic Theory of Independence},
    publisher = {Oxford University Press},
    year = {2013},
    month = {02},
    isbn = {9780199535255},
    doi = {10.1093/acprof:oso/9780199535255.001.0001},
    url = {https://doi.org/10.1093/acprof:oso/9780199535255.001.0001},
}

\end{document}